\newtheorem{theorem}{Theorem}
\newtheorem{lemma}{Lemma}
\newtheorem{proposition}{Proposition}
\newcounter{claim_nb}[theorem]
\def\ba{\begin{array}}
\def\ea{\end{array}}
\def\beq{\begin{equation}}
\def\eeq{\end{equation}}
\def\bea{\begin{eqnarray}}
\def\eea{\end{eqnarray}}
\def\beann{\begin{eqnarray*}}
\def\eeann{\end{eqnarray*}}
\def\R{\mathbb{R}}
\def\cN{\mathcal{N}}
\def\cP{\mathcal{P}}
\begin{document}

\title[Coffman--Sethi conjecture]{Approximation ratio of LD algorithm\\ for multi-processor scheduling\\
and the Coffman--Sethi conjecture}

\author{Peruvemba Sundaram Ravi, Levent Tun\c{c}el}
\thanks{Peruvemba Sundaram Ravi: %({\bf Corresponding Author})
Operations and Decision Sciences,
School of Business and Economics, Wilfrid Laurier University,
Waterloo, Ontario N2L 3C5, Canada
(e-mail: pravi@wlu.ca)\\
Levent Tun\c{c}el: Department of Combinatorics and Optimization, Faculty
of Mathematics, University of Waterloo, Waterloo, Ontario N2L 3G1,
Canada (e-mail: ltuncel@uwaterloo.ca)}

\date{May 6, 2015}

\begin{abstract}
Coffman and Sethi proposed a heuristic algorithm, called LD, for multi-processor scheduling, to minimize makespan over flowtime-optimal schedules.
LD algorithm is a natural extension of a very well-known list scheduling algorithm, Longest Processing Time (LPT) list scheduling, to our bicriteria scheduling problem.
Moreover, in 1976, Coffman and Sethi conjectured that LD algorithm has precisely the following
worst-case performance bound: $\frac{5}{4}-\frac{3}{4(4m-1)}$, where $m$ is the number of machines. In this paper, utilizing some recent work by the authors
and Huang, from 2013, which exposed some very strong combinatorial properties of various presumed minimal counterexamples to the conjecture,
we provide a proof of this conjecture.
The problem and the LD algorithm have connections to other fundamental problems (such as the assembly line-balancing problem) and to other algorithms.
\end{abstract}

\keywords{parallel identical machines, makespan,
total completion time, total flowtime, approximation algorithms, multi-processor scheduling, bicriteria scheduling problems}

\maketitle

\section{Introduction}
\label{sec:intro}

The most fundamental machine environment in multiprocessor scheduling problems is a \emph{parallel identical machine model}.
In this basic set-up, we have $m$ parallel identical machines that are simultaneously available at time zero, and $n$ independent jobs,
all simultaneously available at time zero,
indexed by $1,2, \ldots, n$ with given processing times $p_1, p_2, \ldots, p_n$.  No pre-emption is allowed, and the machines are assumed
to be completely reliable.  For a scheduling problem environment described above, with data $m, p_1, p_2, \ldots, p_n$, there are two
performance criteria that immediately come to mind:
\begin{itemize}
\item
minimize the completion time of the last job (i.e., \emph{makespan}),
\item
minimize the total (or equivalently the average) time that the jobs spend in the system
(i.e., total or average \emph{flowtime}).
\end{itemize}
Given a feasible schedule, let $C_j$ denote the completion time of job $j$ in that schedule.  By denoting
$C_{\max}:= \max_{j \in \{1,2, \ldots, n\}} \left\{ C_j \right\}$, our two criteria are:
\begin{itemize}
\item
minimize $C_{\max}$,
\item
minimize $\sum_{j=1}^n C_j$.
\end{itemize}
Both of these objective functions are easily justifiable.  Indeed, the minimization of makespan may ensure optimal utilization
of resources (i.e., machines) as well as ensuring the earliest possible start times for other tasks that require the completion of
all the jobs $1, 2, \ldots, n$ to be started.  Minimization of total flow time $F:= \sum_{j=1}^n C_j$, minimizes
the amount of time the jobs spend in the system (in our setting this is $C_j$ for each job $j$).  Thus, minimizing $F$
equivalently minimizes, in many applications, work-in-process inventory.  A feasible schedule is
called \emph{flowtime-optimal} if it minimizes $F$. In this paper, we consider
the bicriteria optimization problem of minimizing makespan among all flowtime-optimal schedules.  In scheduling theory
notation, let $F^*$ denote the optimal objective function value of $Pm \,\, / \,\,\,\, / \,\, \sum C_j$.  Then, our
bicriteria optimization problem is: $Pm \,\, / \,\,\,\, / \,\, C_{\max} ; \sum C_j=F^*$, which we call \emph{Flowtime-Makespan (FM)}
problem.

There are two single objective function scheduling problems that make up our bicriteria optimization problem:
$Pm \,\, / \,\,\,\, / \,\, C_{\max}$, and $Pm \,\, / \,\,\,\, / \,\, \sum C_j$.
The second problem is as easy as sorting and, as a result, admits algorithms with $O(n \log(n))$ complexity.  Moreover,
we have a complete characterization of all optimal solutions of $Pm \,\, / \,\,\,\, / \,\, \sum C_j$.
Conway, Maxwell and Miller \cite{ConwayMaxwellMiller1967}, in their seminal book, develop the notion of \emph{rank}
for the FM problem. To simplify presentation, we may assume that $m$ divides $n$ (otherwise, we can add
$(m\left\lceil n/m\right\rceil - n)$ dummy jobs with zero processing times; this modification does not
affect the conclusions of this paper).  Then, in such an instance, there are $k:=n/m$ \emph{ranks}.  We may further
assume that the jobs are indexed in nonincreasing order of processing times
(i.e., with job $1$ having the largest processing time), the set of jobs belonging to \emph{rank} $r$ are the following:
${(r-1)m + 1, (r-1)m + 2,\cdots, (r - 1)m + m}$.
Then a feasible schedule is \emph{flowtime-optimal} if jobs are assigned in decreasing order of ranks,
with the jobs in rank $1$ being assigned last. Since within each rank the assignment of jobs to machines
can be arbitrary, it immediately follows that there are
at least $\left({m!}\right)^{\lfloor n/m \rfloor}$ flowtime-optimal schedules.  From a mathematical viewpoint,
it makes sense to consider a secondary criterion to choose a ``best'' flowtime-optimal schedule among these
huge number of schedules everyone of which minimizes total flowtime.  Also, this is reasonable from a practical viewpoint.

The first problem, $Pm \,\, / \,\,\,\, / \,\, C_{\max}$, is $\cN\cP$-hard even for $m=2$ (trivial reduction from PARTITION).
Graham's ground-breaking work on the subject in 1960's tackled the problem\\
$Pm \,\, / \,\,\,\, / \,\, C_{\max}$.
This work was ground-breaking not only in approximation
algorithms for scheduling, but in approximation algorithms in general. Graham first proved:
\begin{theorem}(Graham \cite{Graham1966})
\emph{List Scheduling} algorithm
has worst-case approximation ratio of $\left(2-\frac{1}{m}\right)$. Moreover, this bound is achievable for every $m \geq 2$.
\end{theorem}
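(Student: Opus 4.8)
The plan is to prove the two assertions separately: first the universal upper bound $C_{\max} \leq \left(2 - \frac{1}{m}\right) C_{\max}^{*}$, where $C_{\max}^{*}$ denotes the optimal makespan for the given instance, and then a family of instances on which List Scheduling attains this ratio exactly, establishing that the bound cannot be improved for any $m \geq 2$.

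For the upper bound, I would first identify the machine whose load equals the makespan produced by List Scheduling, and let $j$ be the last job placed on that machine, beginning at time $s_j$, so that $C_{\max} = s_j + p_j$. The conceptual crux of the whole argument is the structural observation that, since List Scheduling never idles a machine while an unassigned job remains, \emph{every} one of the $m$ machines must be busy throughout $[0, s_j]$; otherwise job $j$ would have been started earlier. Consequently the work processed before $s_j$ is at least $m\, s_j$ and excludes $p_j$, giving
\[
m\, s_j \leq \sum_{i=1}^n p_i - p_j,
\]
and substituting back yields
\[
C_{\max} = s_j + p_j \leq \frac{1}{m}\sum_{i=1}^n p_i + \left(1 - \frac{1}{m}\right) p_j.
\]

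To finish I would invoke the two elementary lower bounds on the optimum: $C_{\max}^{*} \geq \frac{1}{m}\sum_{i=1}^n p_i$ (no schedule beats the average load) and $C_{\max}^{*} \geq p_j$ (each job runs on a single machine without pre-emption). Because $1 - \frac{1}{m} \geq 0$, each summand on the right may be replaced by the matching multiple of $C_{\max}^{*}$, delivering $C_{\max} \leq \left(2 - \frac{1}{m}\right) C_{\max}^{*}$ at once. For tightness I would exhibit the classical instance with $m(m-1)$ jobs of length $1$ followed, at the tail of the list, by a single job of length $m$: fed in this order, List Scheduling spreads the unit jobs evenly so each machine carries load $m-1$ before the long job lands on one of them, producing makespan $(m-1) + m = 2m-1$, while the optimal schedule isolates the long job on one machine and packs $m$ unit jobs onto each of the other $m-1$, achieving makespan $m$; the ratio is precisely $\frac{2m-1}{m} = 2 - \frac{1}{m}$.

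I do not expect a serious computational obstacle here, since both halves reduce to short arguments. The one step demanding genuine care is the structural claim that all machines are busy at time $s_j$: it is the lever on which the entire bound turns, and a fully rigorous proof must argue directly from the greedy assignment rule of List Scheduling rather than treat it as self-evident. The tightness half carries no difficulty of analysis but does require the insight to select the extremal family; once the $m(m-1)$-plus-one instance is on the table, verifying the ratio is routine.
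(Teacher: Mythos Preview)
Your argument is the classical Graham proof and is correct in both halves: the upper bound via the ``all machines busy at time $s_j$'' observation combined with the two lower bounds on $C_{\max}^*$, and the tight family of $m(m-1)$ unit jobs followed by one job of length $m$.

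Note, however, that the paper does not supply its own proof of this statement. The theorem appears in the introduction purely as background, attributed to Graham~\cite{Graham1966}, and the authors move on immediately to the LPT bound and then to the FM problem. So there is no paper proof to compare against; your write-up simply reproduces the standard argument from the cited source, which is exactly what one would expect for this result.
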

Then, Graham analyzed the List Scheduling algorithm when the list is given in LPT order and provided a very elegant
proof of the following result:
\begin{theorem}(Graham \cite{Graham1969})
LPT-List Scheduling algorithm
has worst-case approximation ratio of $\left(\frac{4}{3}-\frac{1}{3m}\right)$. Moreover, this bound is achievable for every $m \geq 2$.
\end{theorem}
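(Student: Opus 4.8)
The plan is to prove the upper bound $\frac{4}{3}-\frac{1}{3m}$ by a minimal-counterexample argument, and then to establish tightness with an explicit family of instances. Throughout, I would index the jobs so that $p_1 \ge p_2 \ge \cdots \ge p_n$, write $C_{\max}^{L}$ for the makespan produced by LPT and $C_{\max}^*$ for the optimal makespan, and repeatedly use the trivial lower bound $C_{\max}^* \ge \frac{1}{m}\sum_{j=1}^n p_j$, since no schedule can beat the average machine load.

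First I would reduce to a convenient counterexample. Suppose the bound fails and pick an instance with the fewest jobs that violates it. The LPT makespan is attained on some machine; let job $\ell$ be the \emph{last} job LPT places on that machine. The jobs $\ell+1, \dots, n$ then sit on other machines and do not affect $C_{\max}^{L}$, so deleting them leaves $C_{\max}^{L}$ unchanged while $C_{\max}^*$ can only decrease; the ratio does not drop, and minimality forces $\ell = n$. Thus the last job in LPT order is critical. At the moment job $n$ is scheduled it lands on a least-loaded machine, whose prior load is $C_{\max}^{L}-p_n$, so every machine then carries load at least $C_{\max}^{L}-p_n$. Summing over machines gives
\[
\sum_{j=1}^n p_j \;\ge\; m\bigl(C_{\max}^{L}-p_n\bigr)+p_n,
\]
and combining with $C_{\max}^* \ge \frac{1}{m}\sum_j p_j$ yields $C_{\max}^{L}\le C_{\max}^* + \frac{m-1}{m}\,p_n$.

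Next I would convert the assumed violation into a structural constraint. If $C_{\max}^{L} > \bigl(\frac{4}{3}-\frac{1}{3m}\bigr)C_{\max}^*$, the inequality above forces $\frac{m-1}{m}\,p_n > \frac{m-1}{3m}\,C_{\max}^*$, hence $p_n > \frac{1}{3}C_{\max}^*$. Because the jobs are sorted, \emph{every} job then exceeds $\frac{1}{3}C_{\max}^*$, so in an optimal schedule no machine can carry three or more jobs, and therefore $n \le 2m$. The crux of the proof — the step I expect to be the main obstacle — is the claim that on such instances LPT is already optimal, which contradicts the counterexample. I would prove this by an exchange argument: with $n \le 2m$, LPT occupies the $m$ machines with the $m$ largest jobs and then drops each of the remaining jobs onto the currently least-loaded (single-job) machine, which pairs the $i$-th largest job of the first group with the $i$-th largest of the second; a swap argument shows any schedule can be rearranged into this balanced pairing without increasing the makespan, so LPT attains $C_{\max}^*$. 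This contradiction completes the upper bound.

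Finally, for achievability I would exhibit, for each $m\ge 2$, the instance with $n=2m+1$ jobs consisting of two copies of every value in $\{m+1,m+2,\dots,2m-1\}$ together with three copies of $m$. The $2(m-1)$ jobs of size exceeding $m$ can be partitioned into $m-1$ pairs each summing to $3m$ (pair size $v$ with size $3m-v$), filling $m-1$ machines to load exactly $3m$, while the three jobs of size $m$ fill the last machine to $3m$; hence $C_{\max}^* = 3m$. A direct trace of LPT shows that the three trailing jobs of size $m$ cannot be absorbed without raising some machine to $4m-1$, so $C_{\max}^{L}=4m-1$. The ratio is therefore exactly $\frac{4m-1}{3m}=\frac{4}{3}-\frac{1}{3m}$, which matches the bound and shows it is best possible for every $m\ge 2$.
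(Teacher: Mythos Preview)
The paper does not prove this theorem at all: it is quoted as Graham's 1969 result, purely as background and motivation for the LD algorithm. So there is no ``paper's proof'' to compare against; what you have written is, in fact, a faithful outline of Graham's original argument (minimal counterexample, force job $n$ to be critical, derive $p_n > C_{\max}^*/3$, hence at most two jobs per machine in any optimum, hence LPT is already optimal --- contradiction), together with the classical tight family of $2m+1$ jobs, and it is essentially correct.

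One small slip worth fixing: you say LPT ``pairs the $i$-th largest job of the first group with the $i$-th largest of the second.'' That is backwards. When all $p_j > C_{\max}^*/3$, one can show (and this needs the hypothesis, not merely $n\le 2m$) that LPT places job $m+k$ on the machine holding job $m+1-k$, i.e.\ it pairs the $i$-th \emph{largest} of the first batch with the $i$-th \emph{smallest} of the second. The exchange argument you allude to then shows this big--small pairing minimises the maximum pair sum, giving $C_{\max}^L = C_{\max}^*$. The rest of your plan, including the tightness instance and the verification that $C_{\max}^* = 3m$ and $C_{\max}^L = 4m-1$, is correct.
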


Just like the scheduling problem $Pm \,\, / \,\,\,\, / \,\, C_{\max}$, the FM problem is also $\cN\cP$-hard (a result of
Bruno, Coffman and Sethi \cite{BCS1974}).
In 1976, Coffman and Sethi \cite{CoffmanSethi1976a} proposed some approximation algorithms for the FM problem. Among these algorithms, in
the LD algorithm (which is an extension of the LPT list scheduling for the FM problem), ranks are assigned in increasing order,
starting with rank one, which is the rank containing the set of $m$ jobs with the largest processing times. Jobs within
the same rank are assigned largest-first onto distinct machines as the machines become
available after completing the jobs in the previous ranks. After all the jobs are assigned, the schedule
is reversed and all jobs in the last rank (that is, rank $k$) must be set to the same starting time of zero.
Coffman and Sethi conjectured the following worst-case bound for the LD algorithm.

\begin{center}
\begin{tabular}{|c|}\hline
\\
\parbox{5in}{\underline{Coffman-Sethi conjecture} \cite{CoffmanSethi1976a}:\\
the LD algorithm has a makespan ratio with a worst-case bound equal to
$$\frac{5m-2}{4m-1}=\frac{5}{4} -\frac{3}{4(4m-1)}.
\,\,\,\,\,\,\,\,\,\,\,\,\,\,\,\,\,\,\,\,\,\,\,\,\,\,\,\,\,\,\,\,\,\,\,
\,\,\,\,\,\,\,\,\,\,\,\,\,\,\,\,\,\,\,\,\,\,\,\,\,\,\,\,\,\,\,\,\,\,\,\,\,\,\,\,\,\,\,\,\,\,\,$$}\\
\\ \hline
\end{tabular}
\end{center}

\vspace{0.2cm}

The authors and Huang \cite{RTH2013} constructed the following family of
instances, proving that the above conjectured ratio cannot be improved
for any $m \geq 2$.
For every integer $m \geq 2$, let $n:=3m$ and define the processing times as:
\[
p_j := \left\{\begin{array}{cl} 0, & \mbox{ for } j \in \{1,2,
\ldots,m-1\};\\
m, & \mbox{ for } j =m;\\
(j-1), & \mbox{ for } j \in \{m+1,m+2, \ldots, 2m\};\\
(j-2), & \mbox{ for } j \in \{2m+1, 2m+2, \ldots, 3m\}.
\end{array}
\right.
\]
It is easy to verify that the ratio of the objective value of an LD
schedule to the optimal objective value is exactly $\frac{5m-2}{4m-1}$,
for every integer $m \geq 2$.  In Figure \ref{fig:1}, we present an LD schedule
for the three-machine instance of this family of bad instances.  Note that each of
the machines 1 and 2 have a job (either job 1 or job 2) with processing time zero,
started and finished at time zero (not shown on the Gantt chart).  Completion
times are 10, 10, and 13 on the machines 1, 2, and 3 respectively.
\begin{figure}[h]
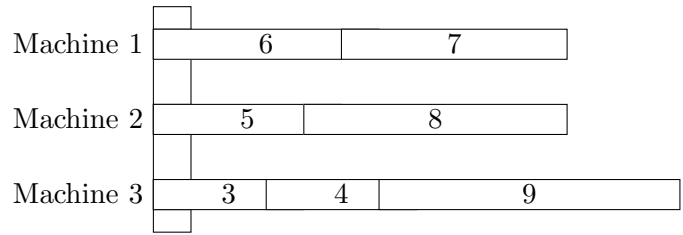

\begin{center}
\begin{ganttchart}{1}{16}
\ganttbar{Machine 1}{1}{1}
\ganttbar[inline]{6}{1}{6}
\ganttbar[inline]{7}{6}{11}\\
\ganttbar{Machine 2}{1}{1}
\ganttbar[inline]{5}{1}{5}
\ganttbar[inline]{8}{5}{11}\\
\ganttbar{Machine 3}{1}{1}
\ganttbar[inline]{3}{1}{4}
\ganttbar[inline]{4}{4}{7}
\ganttbar[inline]{9}{7}{14}
\end{ganttchart}
\caption{LD schedule for $m=3$ \label{fig:1}}
\end{center}
\end{figure}
Figure \ref{fig:2} presents the Gantt chart for an optimal schedule, with optimal
makespan 11, for the
same three-machine instance.  Again, machines 1 and 2 have jobs with zero processing times.
Further note that this schedule is ``rectangular,'' providing an obvious certificate
of the optimality of the underlying makespan.  This term ``rectangular schedule,''
is rigorously defined in the next section.
\begin{figure}[h]
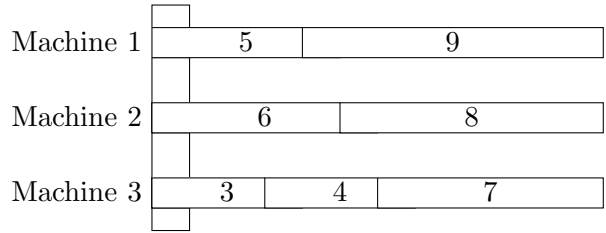

\begin{center}
\begin{ganttchart}{1}{16}
\ganttbar{Machine 1}{1}{1}
\ganttbar[inline]{5}{1}{5}
\ganttbar[inline]{9}{5}{12}\\
\ganttbar{Machine 2}{1}{1}
\ganttbar[inline]{6}{1}{6}
\ganttbar[inline]{8}{6}{12}\\
\ganttbar{Machine 3}{1}{1}
\ganttbar[inline]{3}{1}{4}
\ganttbar[inline]{4}{4}{7}
\ganttbar[inline]{7}{7}{12}
\end{ganttchart}
\caption{Optimal schedule for $m=3$ \label{fig:2}}
\end{center}
\end{figure}

The Coffman-Sethi conjecture has remained open for nearly four decades. The first major
progress was made by the authors and Huang in \cite{RTH2013}.
In the next section, utilizing our recent work with Huang \cite{RTH2013},
we provide a proof of this conjecture.  Our approach in obtaining a proof of Coffman and Sethi's
conjectured bound is to identify properties of a hypothesized minimal counterexample to the conjecture. To define
our notion of minimality, we use a five-level hierarchical ordering of various attributes of the FM problem instances.
Using integrality of the data and
some more sophisticated properties of the minimal counterexamples, we construct some ``smaller'' problem instances by subtracting nonnegative integers from
the number of machines, the number of jobs, or the integer-valued processing times. This exposes some very strong combinatorial structures of presumed
minimal counterexamples.  These special combinatorial structures lead us to very strong constraints that capture the fact that all
problem instances that are ``smaller'' than the minimal counterexample at hand, must satisfy the conjecture.
This eventually leads to a contradiction for problem instances with four or more ranks.  Given that a minimal counterexample cannot exist for one,
two or three ranks, or with two or three machines as was previously proved by the authors and Huang \cite{RTH2013},
these results prove that the conjecture holds for all instances of problem FM.

Note that there are more sophisticated algorithms with better approximation guarantees for the kind of scheduling problems (and in general
combinatorial optimization problems) that we are considering in this paper.  The bin-packing problem provides another combinatorial
optimization context which is helpful in attacking makespan minimization problems in multi-processor scheduling.  First Fit Decreasing
(FFD) algorithm \cite{Johnson1973} (also see \cite{JohnsonDemersUllmanGareyGraham1974}) admits good approximation bounds.  Various alternatives
to FFD were proposed \cite{GareyJohnson1981,CoffmanGareyJohnson1983}.  Moreover, Coffman, Garey and Johnson \cite{CoffmanGareyJohnson1978} designed the MultiFit algorithm
(an approximation algorithm for minimizing makespan) based on FFD and proved a worst-case performance bound.  Recently the exact performance bound
of MultiFit was established as 24/19 by Hwang and Lim \cite{HwangLim2014}. de la Vega and Lueker \cite{delaVegaLueker1981} propose a linear-time
approximation scheme for the bin packing problem.  For some other work on MultiFit
algorithm and related topics, see \cite{ChangHwang1999,Dosa2000,Dosa2001,Friesen1984,Yue1990}.  Hochbaum and Shmoys \cite{HochbaumShmoys1987} propose a PTAS (polynomial-time approximation scheme)
for the makespan minimization problem on the parallel identical machine model.  Eck and Pinedo \cite{EckPinedo1993} propose a new algorithm LPT*
for the FM problem and for the two machine case, prove the worst-case approximation ratio of 28/27.  A slight generalization of the FM problem can be
formulated as the problem of permuting the elements within the columns of a $m$-by-$n$ matrix with nonnegative entries, so as to minimize its
maximum row sum.  This problem, which models the assembly line balancing problem, was studied by Coffman and Yannakakis \cite{CoffmanYannakakis1980} as well as Hsu \cite{Hsu1984}.
Therefore, FM problem is a fundamental problem in multi-processor scheduling, with close ties to other problems in combinatorial optimization.  The LD algorithm
is also closely related to some fundamental heuristics for such problems.

In the next section, we provide a detailed, rigorous definition of the LD algorithm.  In Section \ref{sec:3}, we present a proof of the Coffman--Sethi conjecture.
% Section \ref{sec:4} is a brief conclusion.

\section{LD Algorithm, definitions and some properties}
\label{sec:2}

Let us index the jobs in nonincreasing order of processing times.  The set of jobs belonging to rank $r$ is:
\[
\{(r-1)m + 1, (r-1)m + 2,\ldots, (r - 1)m + m\}.
\]

A feasible schedule in which all rank $(r+1)$ jobs are started before all rank $r$ jobs (where $r \in \{ 1,2,\cdots, (n/m) - 1\}$) is said to satisfy the \emph{rank restriction} or \emph{rank constraint}.
A feasible schedule without idle time and satisfying the rank constraint, and in which all rank $n/m$ jobs start
at time zero, is a \emph{flowtime-optimal schedule}.

In every rank $r$, we identify the largest and smallest processing times and denote them by $\lambda_{r}$ and $\mu_{r}$. Therefore, we have
\bea \lambda_1 \geq \mu_1 \geq \lambda_2 \geq \mu_2 \geq \cdots \geq \lambda_{k-1} \geq \mu_{k-1} \geq \lambda_k \geq \mu_k \geq 0, \eea
where, $k:= \lceil \frac{n}{m} \rceil$.
The \emph{profile} of a schedule after rank $r$ is defined as the sorted set of completion times (in nonincreasing order)
on $m$ machines after rank $r$.  When all the jobs in first $r$ ranks (out of a total of $k$ ranks) have been assigned to machines, and the jobs in the
remaining ranks have not yet been assigned to machines, the profile after rank $r$ is called the \emph{current profile}.
We denote the current profile by $a(r) \in \R^m: a_1(r) \geq a_2(r) \geq \cdots \geq a_m(r)$. I.e., 
$a_{i}(\ell)$ denotes the $i$th largest completion time in the schedule, after rank $\ell$.

The LD algorithm of Coffman and Sethi \cite{CoffmanSethi1976a} schedules the ranks in the following order:
$1, 2 , \ldots, k-1, k.$
Denote the current profile by $a \in \R^m$: $a_1 \geq a_2 \geq \cdots \geq a_m$.
Then, schedule the jobs in the next rank so that the job with the largest processing time is matched with $a_m$
(the smallest part of the current profile), second largest processing time is matched with $a_{m-1}$
(the second smallest part of the current profile), etc., and the smallest processing time is matched with $a_1$
(the largest part of the current profile). After all the
jobs are scheduled, the schedule is reversed and left-justified (i.e., we start the first job on each machine at time zero).

As we mentioned in the introduction, without loss of generality,
we may assume the property
\[
\textup{(Property.1)} \,\,\,\,\,\,\,\,\,\, n=mk,
\]
while allowing some jobs to have a zero processing time.
Given an instance of the FM problem, we denote by $t_{LD}$ the makespan of the LD schedule(s).  We use $t^*$ to denote the makespan of the optimal schedule(s).
The following lemma is stated without proof. A proof is provided in \cite{Ravi2010}.

\begin{lemma} \label{LD_TYPE_I}
(See \cite{RTH2013,Ravi2010})
For the FM problem and the LD algorithm, the following must hold: If there exists a counterexample to a conjectured $t_{LD}/t^*$ ratio, then there exists a counterexample with integer
processing times.\end{lemma}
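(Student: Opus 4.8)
The plan is to regard $t_{LD}$ and $t^*$ as functions of the processing-time vector $p=(p_1,\dots,p_n)\in\R^n$ with $p\ge 0$, and to prove that their ratio is invariant under positive scaling and continuous in $p$. Scale invariance is immediate: for any $\alpha>0$ the LD assignment depends only on the relative order of the processing times and of the current completion times, which scaling preserves, so $t_{LD}(\alpha p)=\alpha\,t_{LD}(p)$; every machine load likewise scales by $\alpha$ in each flowtime-optimal schedule, so $t^*(\alpha p)=\alpha\,t^*(p)$. Granting continuity as well, suppose $p$ is a counterexample to the conjectured bound $B=\frac{5m-2}{4m-1}$, so that $t_{LD}(p)>B\,t^*(p)$. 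This strict inequality between continuous functions then holds on an open neighbourhood of $p$, which, by density of the rationals, contains a point $q$ with all rational coordinates; multiplying $q$ by a common multiple of the denominators of its entries yields a vector $q'$ of nonnegative integers with $t_{LD}(q')/t^*(q')=t_{LD}(q)/t^*(q)>B$ by scale invariance. Thus $q'$ is the desired integer counterexample, and everything reduces to continuity.

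To see that $t_{LD}$ is continuous, I would argue that the sorted current profile $a(r)$ is well defined independently of the tie-breaking rule and depends continuously on $p$. Within each rank the LD rule pairs the sorted processing times with the sorted current completion times in a fixed order (the largest job onto $a_m$, and so on down to the smallest job onto $a_1$); if two jobs, or two current completion times, happen to be equal, interchanging them leaves the resulting multiset of completion times unchanged. Hence $a(r)$ is produced from $a(r-1)$ by a composition of sorting and coordinatewise addition, both continuous (indeed piecewise-linear), so every $a(r)$ is continuous in $p$. Since the LD schedule contains no idle time, its makespan after reversal and left-justification equals the largest completion time $a_1(k)$, and therefore $t_{LD}(p)=a_1(k)$ is continuous.

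For $t^*$, I would write $t^*(p)=\min_{\sigma\in\Sigma}\max_{1\le i\le m}L_i(\sigma,p)$, where $L_i(\sigma,p)$ is the load of machine $i$ under the assignment pattern $\sigma$ and $\Sigma$ is the finite set of flowtime-optimal patterns (within each rank, any bijection of that rank's jobs to the machines is permitted). Each inner maximum is a maximum of linear functions, hence continuous, and $t^*$ is a minimum of finitely many of these, hence continuous. The set $\Sigma$ depends on $p$ only through the partition of jobs into ranks; at an instance where a tie would move a job across a rank boundary, the job involved carries an equal processing time, so the attainable machine loads, and with them $t^*$, are unaffected. As $t^*(p)>0$ for any genuine counterexample, the ratio $t_{LD}/t^*$ is continuous wherever it is relevant, which completes the reduction.

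The one delicate point, and the crux of the lemma, is precisely the tie-breaking issue: read naively, the LD assignment and hence its makespan could jump at instances with coincident processing times, which would destroy continuity. The sorted-profile invariance established in the second paragraph is exactly what resolves this, and it is the step I would expect to demand the most care; the remaining ingredients (scale invariance, density of the rationals, and clearing denominators) are routine.
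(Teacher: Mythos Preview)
Your argument is correct. The paper does not supply its own proof of this lemma; it is explicitly stated without proof, with a pointer to \cite{Ravi2010}, so there is nothing in the present paper to compare your route against. The scaling--plus--continuity--plus--density-of-rationals approach you take is the standard one for such reductions, and your identification of tie-breaking as the only delicate point is apt. Your resolution for $t_{LD}$---that the \emph{sorted} profile $a(r)$ is determined by $a(r-1)$ and the sorted rank-$r$ processing times via a fixed reverse pairing, independently of how ties are broken---is exactly right and makes $t_{LD}=a_1(k)$ piecewise linear in $p$. For $t^*$, a slightly cleaner formulation than tracking how your set $\Sigma$ varies with $p$ is to observe that $t^*$ depends on $p$ only through the order statistics $p_{[1]}\ge\cdots\ge p_{[n]}$, each of which is continuous; the flowtime-optimal schedules are precisely the within-rank permutations of these sorted values, so
\[
t^*(p)\;=\;\min_{\sigma_1,\ldots,\sigma_k\in S_m}\ \max_{1\le i\le m}\ \sum_{r=1}^{k} p_{[(r-1)m+\sigma_r(i)]}
\]
is a finite min--max of linear functions of the order statistics and hence continuous, with no need to discuss rank-boundary ties separately.
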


We will, therefore, restrict our attention to problem instances with integer data in this paper. Note that due to integrality of the
processing times, the smallest nonzero processing time is bounded below by one.
We define the ordered set of processing times $P$ for a scheduling problem instance with $m$ machines and $k$ ranks to consist of elements equal to the processing times of these $mk$ jobs arranged in nonincreasing order. We
use $P(j)$ to refer to the $j^{th}$ entry of $P$.

In our characterizations of minimal counterexamples to the Coffman--Sethi conjecture, minimality is defined based on the attributes: the number of machines, the number of ranks,
and the ordered set of processing times with respect to a five-level, hierarchical grading.  See \cite{RTH2013} for the definition minimality
for the classes of counterexamples we consider.

Throughout this paper, we will isolate many useful properties of minimal counterexamples of various types.
Note that, as it was established in \cite{RTH2013}, we may assume that in a minimal counterexample, the following property holds:
\[
\textup{(Property.2)} \,\,\,\,\,\,\,\,\, \mu_r=\lambda_{r+1}, \forall r \in \{1,2, \ldots, k-1\} \mbox{ and } \mu_k=0.
\]

We continue with some more definitions.
\begin{itemize}
 \item
 A problem instance of \emph{Type I} is a problem instance that has integer processing times.
 \end{itemize}

A counterexample of Type I is an FM problem instance of that type which violates the Coffman--Sethi conjecture. A \emph{minimal counterexample} of Type I is a counterexample of that type for which
there does not exist a smaller counterexample (based on the notion of minimality defined in \cite{RTH2013}) of the Type I.

A \emph{rectangular schedule} is a feasible schedule in which all machines are busy between time zero and the makespan.  Note
that every rectangular schedule minimizes the makespan, since its objective function value matches an obvious lower bound of $\sum_{j=1}^n p_j /m$
on the makespan of every feasible schedule.

 \begin{itemize}
 \item
 A problem instance of \emph{Type IR} is one with integer processing times and a rectangular optimal schedule.
 \end{itemize}

 A counterexample of Type IR is defined analogously.

\section{A proof of the Coffman--Sethi conjecture}
\label{sec:3}

 We state without proofs the following two lemmas from previous work.

\begin{lemma} \label{NONDECREASING_PROFILE} (Ravi, Tun\c{c}el and Huang \cite{RTH2013})
An increase in one or more processing times of jobs in rank $r$ for $r \in\{1,2, \ldots,k -1\}$ (with no change in the remaining processing times, and subject to the rank constraint) does
not result in a reduction in any element of the profile $a(\ell)$ of an LD schedule after rank $\ell \in \{r, r+1, \ldots, k\}$.
\end{lemma}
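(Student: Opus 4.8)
The plan is to proceed by induction on the rank index $\ell$, exploiting the fact that each LD update is monotone in the current profile. Writing $a(\ell)$ for the profile of the original instance and $a'(\ell)$ for the profile of the instance with the increased rank-$r$ processing times, each sorted in nonincreasing order, I want to establish $a_i'(\ell) \geq a_i(\ell)$ for every $i \in \{1, \ldots, m\}$ and every $\ell \in \{r, r+1, \ldots, k\}$, which is exactly the assertion that no profile entry decreases.

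The engine of the argument will be one elementary monotonicity fact about sorting: if $x, y \in \R^m$ satisfy $x_i \geq y_i$ for every index $i$, then after rearranging both into nonincreasing order one still has $x_{(i)} \geq y_{(i)}$ for all $i$; this follows from the maximin characterization $x_{(i)} = \max_{|S| = i} \min_{j \in S} x_j$. An immediate corollary, which I will use in the base case, is that raising one or more entries of a vector can only raise its nonincreasing rearrangement entrywise.

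For the base case $\ell = r$ I would observe that the profile entering rank $r$ is identical in the two instances, since only rank-$r$ processing times were changed. The LD rule adds the $i$-th largest rank-$r$ processing time to the $i$-th smallest entry of this common profile; because the increased rank-$r$ times dominate the original ones after sorting, the two pre-sorting sum vectors dominate entrywise, and the sorting fact yields $a'(r) \geq a(r)$. For the inductive step I would assume $a'(\ell - 1) \geq a(\ell-1)$ entrywise. Since $\ell > r$, the rank-$\ell$ jobs are unchanged, so the LD rule adds the same $i$-th largest rank-$\ell$ processing time to the $i$-th smallest entry of each profile; the $i$-th smallest entry of $a'(\ell-1)$ is at least that of $a(\ell-1)$ by hypothesis, the pre-sorting sum vectors again dominate entrywise, and sorting preserves this to give $a'(\ell) \geq a(\ell)$. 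This closes the induction.

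The part that will require the most care is the alignment of the matching across the two instances: because the LD rule pairs jobs to machines by sorted position (largest remaining job to smallest current completion time), I must verify that in both instances the \emph{same} job of rank $\ell$ is added to the \emph{same} sorted position of the profile before comparing sums. This is precisely what reduces each step to the sorting-monotonicity fact, and the rank constraint assumed in the hypothesis is what guarantees that the two instances schedule the same jobs in the same ranks, so that this alignment is legitimate. The remaining points --- that increasing entries dominates after sorting, and that entrywise domination survives sorting --- are routine and need no computation.
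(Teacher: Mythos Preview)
The paper does not give its own proof of this lemma; it is quoted from \cite{RTH2013} and explicitly introduced with ``We state without proofs the following two lemmas from previous work.'' So there is no in-paper argument to compare against.

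Your argument is correct. The induction on $\ell$, driven by the sorting-monotonicity fact $x\geq y \Rightarrow x_{(\cdot)}\geq y_{(\cdot)}$, is exactly the right engine, and you have identified the key structural point: because LD matches the $j$th-largest profile entry with the $(m{+}1{-}j)$th-largest processing time, the unsorted post-rank vector in each instance is $\bigl(a_j(\ell-1)+q_{m+1-j}\bigr)_{j}$, so entrywise domination of the inputs passes straight through to the outputs before sorting, and sorting preserves it. One small remark worth making explicit (you gesture at it in the last paragraph) is that the LD \emph{profile} after each rank is uniquely determined regardless of how ties are broken, since the multiset $\{a_j(\ell-1)+q_{m+1-j}:j=1,\dots,m\}$ depends only on the sorted profile and the sorted rank-$\ell$ processing times; this is what makes ``an LD schedule'' unambiguous at the level of profiles and lets your position-by-position comparison go through cleanly.
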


\begin{lemma}
\label{cor:IR}(Ravi, Tun\c{c}el and Huang \cite{RTH2013})
If the Coffman--Sethi conjecture is false, then there exists a minimal counterexample to the conjecture of Type IR.
\end{lemma}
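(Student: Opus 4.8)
The plan is to convert an arbitrary Type I counterexample into a Type IR counterexample by ``filling in'' the idle time of an optimal schedule, and then to invoke well-foundedness of the minimality ordering. First I would apply Lemma \ref{LD_TYPE_I}: if the conjecture fails there is a counterexample $\mathcal{I}$ with integer data, say with $m$ machines, $k$ ranks, LD makespan $t_{LD}$, and an optimal FM schedule $\sigma^*$ of makespan $t^*$. Any counterexample must have $k \geq 2$, since for $k = 1$ all jobs lie in a single rank, so both the LD schedule and every flowtime-optimal schedule have makespan equal to the largest processing time and the ratio is $1$; hence rank $1$ belongs to $\{1, 2, \ldots, k-1\}$, the regime covered by Lemma \ref{NONDECREASING_PROFILE}.

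The construction exploits the rank structure of flowtime-optimal schedules. Because $\sigma^*$ is optimal for the FM problem it is flowtime-optimal, so it has no idle time and assigns to each machine exactly one job from each rank; in particular each machine carries exactly one rank-$1$ job. Writing $L_i \leq t^*$ for the completion time of machine $i$ in $\sigma^*$, I would build $\mathcal{I}'$ by increasing, for every $i$, the processing time of the rank-$1$ job on machine $i$ by the integer amount $t^* - L_i$, leaving all other processing times unchanged. Since each rank-$1$ time is only raised and each was already at least $\lambda_2$, every modified rank-$1$ time remains $\geq \lambda_2$, so the top $m$ processing times are still exactly the rank-$1$ jobs and the chain $\lambda_1 \geq \mu_1 \geq \lambda_2 \geq \cdots$ is preserved. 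Thus $\mathcal{I}'$ has the same rank partition, integer data, and $k \geq 2$.

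This choice pays off on both sides. On the optimal side, keeping the same job-to-machine and rank assignment but with the enlarged rank-$1$ jobs yields a schedule $\sigma'$ in which every machine is busy exactly on $[0, t^*]$; hence $\sigma'$ is rectangular, and as noted in the excerpt every rectangular schedule is makespan-optimal. Since $\sum_j p'_j = m t^*$ forces the lower bound $t^*$, the optimal FM makespan of $\mathcal{I}'$ is exactly $t^*$ and is attained by a rectangular schedule, so $\mathcal{I}'$ is of Type IR. On the LD side, because only rank-$1$ processing times (with $1 \in \{1, \ldots, k-1\}$) were increased subject to the rank constraint, Lemma \ref{NONDECREASING_PROFILE} ensures that no entry of the LD profile after rank $k$ decreases, whence $t_{LD}(\mathcal{I}') \geq t_{LD}(\mathcal{I})$. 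Consequently
\[
\frac{t_{LD}(\mathcal{I}')}{t^*(\mathcal{I}')} \;\geq\; \frac{t_{LD}(\mathcal{I})}{t^*} \;>\; \frac{5m-2}{4m-1},
\]
so $\mathcal{I}'$ is again a counterexample, now of Type IR.

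Having produced a Type IR counterexample, I would finish by invoking well-foundedness of the minimality ordering (the hierarchy on the number of machines, the number of ranks, and the ordered set of processing times, as in \cite{RTH2013}) to conclude that a minimal counterexample of Type IR exists. I expect the only delicate point to be the middle step, namely absorbing all idle time machine-by-machine while keeping integrality and the rank constraint intact. The two structural facts resolve this cleanly: flowtime-optimality places exactly one rank-$1$ job on each machine, so each deficit $t^* - L_i$ has a dedicated job to absorb it, and enlarging only rank-$1$ jobs can never disturb the rank ordering. Hence no fractional adjustment or case analysis is required, and the transformation is valid for every $m \geq 2$ and $k \geq 2$.
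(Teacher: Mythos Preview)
The paper states this lemma without proof, citing \cite{RTH2013}, so there is no in-paper argument to compare against directly. Your proof is correct, and its core construction---absorbing each machine's idle time $t^*-L_i$ into that machine's rank-$1$ job to obtain a rectangular optimum, then invoking Lemma~\ref{NONDECREASING_PROFILE} to bound $t_{LD}$ from below---is precisely the mechanism the paper codifies a few paragraphs later as the second step of the $\Box$-REDUCE procedure, so your approach is essentially the intended one.
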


Next, we describe two useful ways (procedures REDUCE(P1,$r$) and $\Box$-REDUCE(P1,$r$)) of generating ``smaller'' FM instances from a given FM instance.
Let P1 denote an FM problem instance.

REDUCE(P1,$r$): Construct P2 from P1 by subtracting one time unit from the processing time of every job in rank $r-1$ and subtracting one time unit from the processing time of every job in rank $r$ that has a processing time of $\lambda_{r}$. Leave the remaining processing times unchanged.

Figure \ref{fig:3} presents an LD schedule $S$ for $m:=3$, $P1:=[9,8,7,7,6,5,5,2,1]$. Completion
times are 15, 16, and 19 on the machines 1, 2, and 3 respectively.

\begin{figure}[h]
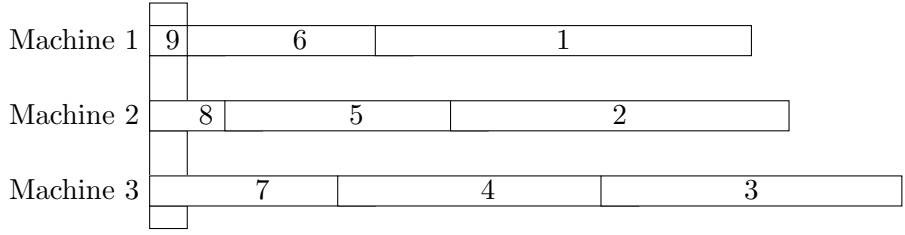

\begin{center}
\begin{ganttchart}{1}{24}
\ganttbar{Machine 1}{1}{1}
\ganttbar[inline]{9\,\,\,\,\,\,}{1}{2}
\ganttbar[inline]{6}{2}{7}
\ganttbar[inline]{1}{7}{16}\\
\ganttbar{Machine 2}{1}{1}
\ganttbar[inline]{8}{1}{3}
\ganttbar[inline]{5}{3}{9}
\ganttbar[inline]{2}{9}{17}\\
\ganttbar{Machine 3}{1}{1}
\ganttbar[inline]{7}{1}{6}
\ganttbar[inline]{4}{6}{13}
\ganttbar[inline]{3}{13}{20}
\end{ganttchart}
\caption{An LD schedule $S$ for the instance given by P1 \label{fig:3}}
\end{center}
\end{figure}

 Figure \ref{fig:4} presents the result of the application of REDUCE(P1,2) to the original LD schedule $S$,
yielding REDUCE(P1,2) = $[8,7,6,6,6,5,5,2,1]$, and the completion times become 14, 15, and 17 on the machines 1, 2, and 3 respectively.

\begin{figure}[h]
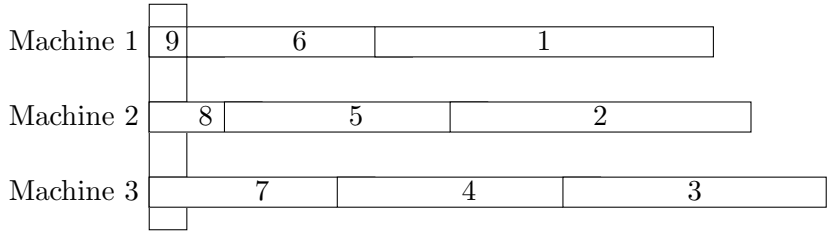

\begin{center}
\begin{ganttchart}{1}{24}
\ganttbar{Machine 1}{1}{1}
\ganttbar[inline]{9\,\,\,\,\,\,}{1}{2}
\ganttbar[inline]{6}{2}{7}
\ganttbar[inline]{1}{7}{15}\\
\ganttbar{Machine 2}{1}{1}
\ganttbar[inline]{8}{1}{3}
\ganttbar[inline]{5}{3}{9}
\ganttbar[inline]{2}{9}{16}\\
\ganttbar{Machine 3}{1}{1}
\ganttbar[inline]{7}{1}{6}
\ganttbar[inline]{4}{6}{12}
\ganttbar[inline]{3}{12}{18}
\end{ganttchart}
\caption{LD schedule $S_1$ for the instance REDUCE(P1,2) \label{fig:4}}
\end{center}
\end{figure}

$\Box$-REDUCE(P1,$r$): Construct P2 from P1 by applying the procedure REDUCE(P1,$r$) to P1. Construct P2R from P2 as follows. For every job in rank $1$ of the optimal schedule for P2 that is processed on a machine with a completion time after rank $k$ that is less than the makespan, increase the processing time so that the completion time after rank $k$ becomes equal to the makespan.

Note that every instance generated by $\Box$-REDUCE(P1,$\cdot$) has, by construction, a rectangular
optimal schedule.
Figure \ref{fig:5} presents an optimal schedule for the instance given by P2 := $[8,7,6,6,6,5,5,2,1]$.  In this
optimal schedule, the completion times are 15, 15, and 16 on the machines 1, 2, and 3 respectively.

\begin{figure}[h]
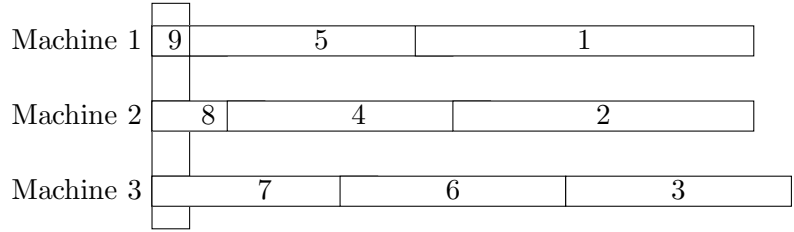

\begin{center}
\begin{ganttchart}{1}{24}
\ganttbar{Machine 1}{1}{1}
\ganttbar[inline]{9\,\,\,\,\,\,}{1}{2}
\ganttbar[inline]{5}{2}{8}
\ganttbar[inline]{1}{8}{16}\\
\ganttbar{Machine 2}{1}{1}
\ganttbar[inline]{8}{1}{3}
\ganttbar[inline]{4}{3}{9}
\ganttbar[inline]{2}{9}{16}\\
\ganttbar{Machine 3}{1}{1}
\ganttbar[inline]{7}{1}{6}
\ganttbar[inline]{6}{6}{11}
\ganttbar[inline]{3}{12}{17}
\end{ganttchart}
\caption{An optimal schedule for the instance given by P2 \label{fig:5}}
\end{center}
\end{figure}

Figure \ref{fig:6} presents an optimal schedule for the instance $\Box$-REDUCE(P1,2).
Here, $\Box$-REDUCE(P1,2) = $[9,8,6,6,6,5,5,2,1]$ whose optimal schedules yield
a makespan of 16 on every machine.

\begin{figure}[h]
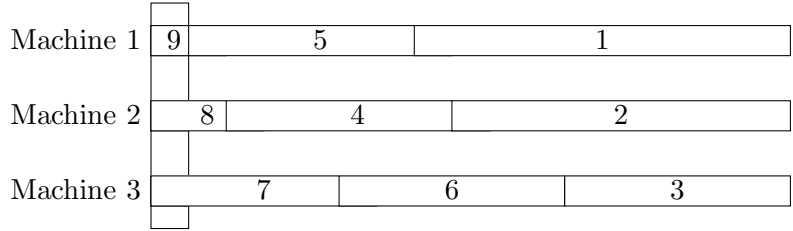

\begin{center}
\begin{ganttchart}{1}{24}
\ganttbar{Machine 1}{1}{1}
\ganttbar[inline]{9\,\,\,\,\,\,}{1}{2}
\ganttbar[inline]{5}{2}{8}
\ganttbar[inline]{1}{8}{17}\\
\ganttbar{Machine 2}{1}{1}
\ganttbar[inline]{8}{1}{3}
\ganttbar[inline]{4}{3}{9}
\ganttbar[inline]{2}{9}{17}\\
\ganttbar{Machine 3}{1}{1}
\ganttbar[inline]{7}{1}{6}
\ganttbar[inline]{6}{6}{11}
\ganttbar[inline]{3}{12}{17}
\end{ganttchart}
\caption{An optimal schedule for the instance $\Box$-REDUCE(P1,2) \label{fig:6}}
\end{center}
\end{figure}

\begin{proposition} \label{LD_makespan_of_P2}
\begin{enumerate}
 \item Let P1 be a minimal counterexample of Type I and
 \textup{P2}:= REDUCE(\textup{P1},$r$).  Then, $t_{LD}(\textup{P2}) \leq t_{LD}(\textup{P1})-2.$
\item Let \textup{P1} be a minimal counterexample of Type IR and
\textup{P2R}$:= \Box$-REDUCE(\textup{P1},$r$).\\
Then, $t_{LD}(\textup{P2R}) \leq t_{LD}(\textup{P1})-2.$
 \end{enumerate}
\end{proposition}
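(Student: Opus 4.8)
The plan is to prove both parts by the same device: assume the conclusion fails and exhibit the reduced instance as a strictly smaller counterexample \emph{of the same type}, contradicting the minimality of P1. I would deliberately avoid trying to establish the bound by analysing the LD schedule directly. One can couple the two LD runs: the two instances agree through rank $r-2$, after rank $r-1$ the whole profile of P2 drops by exactly one unit (the matching within rank $r-1$ is unchanged because all its jobs shrink equally), and Lemma~\ref{NONDECREASING_PROFILE} then propagates a uniform domination through the remaining ranks. But this route only yields $t_{LD}(\textup{P2}) \le t_{LD}(\textup{P1}) - 1$, and that weaker bound is \emph{best possible} for a general instance (e.g.\ a balanced two-machine, two-rank instance that is not a counterexample loses only one unit). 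The extra unit of decrease is forced precisely by the hypothesis that P1 violates the conjectured ratio.

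The first ingredient I would isolate is a companion bound on the optimal makespan, $t^*(\textup{P2}) \le t^*(\textup{P1}) - 1$. Recall that in any flowtime-optimal schedule each machine processes exactly one job of each rank (the rank constraint, the absence of idle time, and the presence of $m$ jobs per rank force a layered schedule). So I would take an optimal schedule for P1 and apply the processing-time changes of REDUCE(P1,$r$) in place: every machine loses exactly one unit from its unique rank-$(r-1)$ job, and possibly more from a rank-$r$ job of length $\lambda_r$. The assignment still obeys the rank constraint and has no idle time, hence remains flowtime-optimal for P2, and its makespan is at most $t^*(\textup{P1}) - 1$. (The decrements are legitimate because $r$ is taken with $\lambda_r \ge 1$, so by integrality every job lowered is positive, and the rank partition is preserved since $\mu_{r-1} \ge \lambda_r$.) For part (2), the rectangularizing step of $\Box$-REDUCE only raises certain rank-one jobs up to the makespan level of the optimal schedule of P2, so $t^*(\textup{P2R}) = t^*(\textup{P2}) \le t^*(\textup{P1}) - 1$, and by construction P2R has a rectangular optimal schedule, i.e.\ it is of Type IR.

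With this in hand the contradiction is a one-line computation. Suppose $t_{LD}(\textup{P2}) \ge t_{LD}(\textup{P1}) - 1$. Using that P1 is a counterexample together with the companion bound,
\[
\begin{aligned}
t_{LD}(\textup{P2}) &\ge t_{LD}(\textup{P1}) - 1 \;>\; \frac{5m-2}{4m-1}\,t^*(\textup{P1}) - 1\\
&\ge \frac{5m-2}{4m-1}\bigl(t^*(\textup{P2})+1\bigr) - 1 \;=\; \frac{5m-2}{4m-1}\,t^*(\textup{P2}) + \frac{m-1}{4m-1}.
\end{aligned}
\]
Since $m \ge 2$ forces $\frac{m-1}{4m-1} > 0$, the instance P2 itself violates the conjectured ratio. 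But REDUCE strictly lowers the total processing time while leaving $m$, $k$, and integrality unchanged, so P2 is a strictly smaller counterexample of Type~I than P1, contradicting minimality. Hence $t_{LD}(\textup{P2}) < t_{LD}(\textup{P1}) - 1$, and integrality of $t_{LD}$ upgrades this to $t_{LD}(\textup{P2}) \le t_{LD}(\textup{P1}) - 2$. Part~(2) is identical with P2R in place of P2: the same chain makes P2R a Type~IR counterexample, and since $\Box$-REDUCE drives the total load down to $m\,t^*(\textup{P2}) \le m\,t^*(\textup{P1}) - m < \sum_{j} p_j(\textup{P1})$ (using that P1, being Type~IR, is rectangular), P2R sits strictly below P1, contradicting minimality within Type~IR.

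The main obstacle is conceptual rather than computational: one must resist proving the $-2$ bound from the LD dynamics, because that argument stalls at $-1$ and the sharper statement is genuinely false without the counterexample hypothesis. The remaining difficulty is bookkeeping that I would carry out carefully: verifying that REDUCE and $\Box$-REDUCE return legitimate instances (nonnegative integer data and an unchanged rank partition), that the modified optimal schedules stay flowtime-optimal after the in-place processing-time changes, and, for part~(2), that P2R is truly of Type~IR and lies strictly below P1 in the five-level minimality order of \cite{RTH2013}.
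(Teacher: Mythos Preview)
Your proof is correct and follows essentially the same approach as the paper's: assume for contradiction that the LD makespan drops by at most one, combine this with the (easy) bound $t^*(\text{new}) \le t^*(\text{P1})-1$ to exhibit the reduced instance as a strictly smaller counterexample of the same type, contradicting minimality. The paper's version is terser---it observes that the new ratio is at least $(t_{LD}-1)/(t^*-1) > t_{LD}/t^*$ rather than carrying out your explicit computation against the conjectured bound, and it leaves ``$t^*$ drops by at least one'' as evident---but the architecture is identical.
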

\begin{proof} We prove the second assertion only (the proof of the first assertion is similar).  For the purpose of a contradiction,
assume that the proposition is false. Therefore, the LD makespan of P2R is at least $t_{LD} - 1$,
where $t_{LD}$ denotes the makespan of the LD schedule for P1. It is evident that P2R has an optimal makespan that is at most $t^{*} - 1$, where $t^{*}$ denotes the optimal makespan of P1. Therefore P2R is a problem instance of Type IR with a worse approximation ratio than P1 and the same number of ranks and machines and jobs with nonzero processing times as P1. Thus, we have a contradiction and therefore the proposition must be true.\end{proof}

We define a minimal counterexample of Type IR1 as follows. If the Coffman--Sethi conjecture is false, a minimal counterexample of Type IR1 is a minimal counterexample of Type IR that has an LD schedule with the property that
every machine with a completion time after rank $k$ equal to the makespan has a job with processing time equal to $\lambda_k$ in rank $k$.  (Recall, $k$ denotes the number of ranks.)

\begin{lemma} \label{IR1_EXISTS}
(Ravi, Tun\c{c}el and Huang \cite{RTH2013})
If the Coffman--Sethi conjecture is false, then there exists a minimal counterexample to the conjecture of Type IR1, and every minimal counterexample of Type IR is a minimal counterexample of Type
IR1.\end{lemma}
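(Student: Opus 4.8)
The plan is to prove the second assertion first and deduce the first from it. Granting that every minimal counterexample of Type IR meets the additional requirement in the definition of Type IR1, Lemma~\ref{cor:IR} supplies a minimal Type IR counterexample, which is then automatically a Type IR1 counterexample; since the Type IR1 counterexamples form a subclass of the Type IR counterexamples under the same hierarchical grading, an instance minimal in the larger class is a fortiori minimal in the subclass, and the two notions of minimal counterexample coincide. Thus the whole lemma reduces to the claim that \emph{every minimal counterexample of Type IR possesses an LD schedule in which each makespan-attaining machine carries a $\lambda_k$-job in rank $k$.}

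To prove this, let $\textup{P1}$ be a minimal counterexample of Type IR and suppose, for contradiction, that some LD schedule $S$ for $\textup{P1}$ fails the requirement. Recalling that $\mu_k=0$ by (Property.2) and that LD assigns the largest rank-$k$ job, of size $\lambda_k$, to the machine with the smallest completion time $a_m(k-1)$ after rank $k-1$, the failure means there is a machine $B$ with completion time equal to the makespan $a_1(k)$ whose rank-$k$ job $q$ satisfies $q<\lambda_k$; in particular $B$ carries no $\lambda_k$-job, so some other machine $A\neq B$ does. I would then form $\textup{P1}'$ by lowering that one $\lambda_k$-job on $A$ by a single unit.

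The two facts I would establish are $t^*(\textup{P1}')\le t^*(\textup{P1})$ and $t_{LD}(\textup{P1}')=t_{LD}(\textup{P1})=a_1(k)$. The first is immediate, since lowering a processing time cannot raise the optimal makespan. For the second, the profile $a(k-1)$ is untouched because ranks $1,\dots,k-1$ are unchanged; a single-unit decrease can move the modified job down past only other entries equal to $\lambda_k$, so the rank-$k$ jobs of value below $\lambda_k$ keep their matched profile positions, and one may choose the LD schedule of $\textup{P1}'$ so that $B$ again receives a job of size $q$ (breaking the one possible new tie, when $q=\lambda_k-1$, in $B$'s favour). Then $B$ still has completion time $a_1(k)$, giving $t_{LD}(\textup{P1}')\ge a_1(k)$, while Lemma~\ref{NONDECREASING_PROFILE}, applied in the decreasing direction, gives $t_{LD}(\textup{P1}')\le a_1(k)$. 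Consequently the ratio $t_{LD}/t^*$ of $\textup{P1}'$ is no smaller than that of $\textup{P1}$, so $\textup{P1}'$ is again a counterexample, and it is strictly smaller than $\textup{P1}$ in the hierarchical grading, having the same numbers of machines and ranks but a smaller ordered processing-time set $P$.

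The main obstacle is that $\textup{P1}'$ need not itself be of Type IR: lowering a single job changes $\sum_j p_j$ by one, and since $\sum_j p_j=m\,t^*(\textup{P1})$ is divisible by $m$ for a rectangular optimum, $\textup{P1}'$ cannot have a rectangular optimal schedule when $m\ge 2$. Hence the contradiction must be routed back into Type IR: either by feeding the strictly smaller Type I counterexample $\textup{P1}'$ through the construction underlying Lemma~\ref{cor:IR} --- which I would verify produces a Type IR counterexample no larger than $\textup{P1}'$, and therefore strictly smaller than $\textup{P1}$ --- or by re-rectangularizing $\textup{P1}'$ directly via the completion step of $\Box$-REDUCE and checking that this keeps it a counterexample while still strictly decreasing the five-level grading. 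Reconciling this completion step (which raises some rank-$1$ processing times) with a \emph{net} strict decrease in the grading, and confirming the ratio does not improve in the process, is the delicate bookkeeping on which the argument turns; it is exactly the kind of accounting carried out in \cite{RTH2013}. Once this contradiction is secured, no LD schedule of $\textup{P1}$ can fail the requirement, so $\textup{P1}$ is of Type IR1, completing the proof.
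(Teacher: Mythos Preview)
Your argument has a genuine gap precisely where you identify it, and your proposed fixes do not close it. After you lower a single $\lambda_k$-job on machine $A$ by one unit, the instance $\textup{P1}'$ has total processing time $m\,t^*(\textup{P1})-1$. The optimal makespan of $\textup{P1}'$ is still $t^*(\textup{P1})$: in the rectangular optimal schedule for $\textup{P1}$ only one machine drops to $t^*(\textup{P1})-1$ while the others remain at $t^*(\textup{P1})$. Hence the $\Box$-REDUCE completion step adds exactly one unit back to a rank-$1$ job, restoring the total processing time to $m\,t^*(\textup{P1})$. You have decreased one small (rank-$k$) processing time and increased one large (rank-$1$) processing time; under any natural five-level grading based on the sorted processing-time vector this is not a strict decrease, and indeed you may end up with an instance that is componentwise no smaller than $\textup{P1}$. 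The alternative route through the construction behind Lemma~\ref{cor:IR} has the same problem: that construction (as in \cite{RTH2013}) also rectangularizes by raising rank-$1$ jobs, so there is no reason the output should be strictly smaller than $\textup{P1}$. The ``delicate bookkeeping'' you defer is not merely delicate; with a single-unit reduction there is simply no slack to absorb the rectangularization.

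The paper avoids this entirely by a different, and essentially one-line, argument. It applies $\Box$-\textsc{REDUCE}$(\textup{P1},k)$, which subtracts one from \emph{every} rank-$(k-1)$ job and from every $\lambda_k$-job in rank $k$. Because all rank-$(k-1)$ jobs drop by one, every machine in the rectangular optimum loses at least one unit, so $t^*$ drops by at least one and the re-rectangularized instance \textup{P2R} has strictly smaller total processing time; Proposition~\ref{LD_makespan_of_P2} then gives $t_{LD}(\textup{P2R})\le t_{LD}(\textup{P1})-2$. On the other hand, if $\textup{P1}$ is not of Type IR1, some makespan-attaining machine carries a rank-$k$ job strictly below $\lambda_k$; under \textsc{REDUCE}$(\textup{P1},k)$ that machine loses exactly one unit (only its rank-$(k-1)$ job is touched), so $t_{LD}(\textup{P2})=t_{LD}(\textup{P1})-1$, and by Lemma~\ref{NONDECREASING_PROFILE} the completion step cannot lower this, giving $t_{LD}(\textup{P2R})\ge t_{LD}(\textup{P1})-1$. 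This contradicts Proposition~\ref{LD_makespan_of_P2}. The point is that reducing \emph{all} rank-$(k-1)$ jobs simultaneously forces $t^*$ down and creates the slack needed for rectangularization to still yield a smaller instance; a single-job reduction does not.
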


Below, we give a proof of this lemma utilizing the procedure REDUCE and $\Box$-REDUCE to
give a glimpse of how to establish properties of Type IR1 and other related minimal counterexamples.

\begin{proof} Suppose the Coffman--Sethi conjecture is false.
Then, by Lemma \ref{cor:IR}, a counterexample of Type IR
exists. Suppose, for the purpose of a contradiction that there exists a minimal
counterexample P1 of Type IR that is not Type IR1.
Now construct a problem instance P2R of Type IR by applying the procedure $\Box$-REDUCE(P1,$k$) to P1. The first step of the procedure is REDUCE(P1,$k$) which produces a problem instance P2. This step leaves the assignment of jobs in each rank to machines in the LD schedule unchanged. Therefore
 P2 has an LD makespan of $t_{LD} - 1$, where $t_{LD}$ denotes the LD makespan of P1. By Lemma \ref{NONDECREASING_PROFILE}, the LD makespan of P2R cannot be less than the LD makespan of P2. Therefore it is greater than or equal to $t_{LD} - 1$. However, by Proposition \ref{LD_makespan_of_P2}, the LD makespan of P2R is less than or equal to $t_{LD} - 2$. Therefore, we have a contradiction and the lemma must hold.\end{proof}

Our proof of the main result of this paper uses the concept of Type I2 instances
as established in \cite{RTH2013}.  The proof that ``if Coffman--Sethi conjecture is false then a counterexample
of Type I2 exists'' uses procedures like REDUCE and $\Box$-REDUCE and the properties
of Type IR1 counterexamples.  Next, we define Type I2 instances.

If the Coffman--Sethi conjecture is false, a counterexample to the conjecture of \emph{Type I2} is a counterexample of Type I that has an LD schedule with the
following properties:
\begin{enumerate}[(i)]
\item
It has only one machine $i'$ with a completion time after rank $k$ equal to the makespan.
\item
Machine $i'$ has a processing time equal to $\lambda_r$ in rank $r$ for every $r \in \{2,3, \ldots, k\}$.
\end{enumerate}

\begin{lemma} \label{lem:I2properties} (Ravi, Tun\c{c}el and Huang \cite{RTH2013})
If the Coffman--Sethi conjecture is false, then there exists a minimal counterexample to the conjecture of Type I2.  Moreover, in a minimal counterexample of Type I2, the following properties hold:
\begin{itemize}
\item
the sole machine $i'$ with a completion time after rank $k$ equal to the makespan
in the LD schedule has a processing time equal to $\mu_1$ in rank $1$;
\item
there exists at least one machine $i''$ with $i'' \neq i'$, such that
the completion time after rank $(k - 1)$ on machine $i''$ is greater
than or equal to the completion time after rank $(k - 1)$ on machine $i'$;
\item
the smallest completion time after rank $k$ on any machine is
at least
\[t_{LD} - \displaystyle{\max_{r \in \{2, 3, \ldots ,k\}} \left\{ \lambda_r - \mu_r \right\}}.
\]
\end{itemize}
 \end{lemma}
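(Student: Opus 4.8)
The plan is to argue by minimality and to reuse the reduction machinery exactly as in the proof of Lemma~\ref{IR1_EXISTS}. Assume the Coffman--Sethi conjecture is false. For the existence assertion I would begin from a minimal counterexample of Type IR1, which exists by Lemma~\ref{IR1_EXISTS}. In such an instance every machine attaining the LD makespan already carries a $\lambda_k$-job in rank $k$; the two additional requirements of Type I2 are that (i) \emph{exactly one} machine $i'$ attains the makespan, and (ii) this $i'$ carries $\lambda_r$ in \emph{every} rank $r\in\{2,3,\ldots,k\}$. To force (i)--(ii) I would, whenever a makespan machine fails to carry $\lambda_r$ at some rank $r$ or the makespan is attained more than once, apply REDUCE or $\Box$-REDUCE at the offending rank, using Lemma~\ref{NONDECREASING_PROFILE} to track how each profile $a(\ell)$ moves and Proposition~\ref{LD_makespan_of_P2} to certify that the reduced instance is again a counterexample with the same machine and rank counts but is strictly smaller in the minimality hierarchy. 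As in Lemma~\ref{IR1_EXISTS}, this contradicts minimality unless the instance already satisfies (i)--(ii); hence a Type I2 counterexample exists, and therefore so does a minimal one.

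With a minimal Type I2 counterexample fixed, I would read the three ``moreover'' properties off the LD placement rule (largest remaining job of a rank onto a machine of currently smallest completion) together with the normalization $\mu_r=\lambda_{r+1}$ for $r<k$ and $\mu_k=0$ (Property.2). For the first bullet, note that before rank $1$ all machines are idle, so the profile $a(1)$ is precisely the sorted list of rank-$1$ processing times and $a_m(1)=\mu_1$. Since $i'$ carries a $\lambda_2$-job (the largest job of rank $2$, and indeed $\lambda_2=\mu_1$), the LD rule matches it to a machine whose completion after rank $1$ equals $a_m(1)=\mu_1$; as that completion is exactly the rank-$1$ job of $i'$, machine $i'$ carries $\mu_1$ in rank $1$.

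For the second bullet, $i'$ carries a $\lambda_k$-job in rank $k$ (here $k\ge 2$), so by the anti-sorted LD matching $i'$ sits among the machines of smallest completion after rank $k-1$; the machine that receives the rank-$k$ job $\mu_k=0$ is distinct from $i'$ (here $\lambda_k>0$, for otherwise the last rank is vacuous and the rank count could be reduced) and has completion after rank $k-1$ at least that of $i'$, giving the required $i''$. For the third bullet put $M:=\max_{2\le r\le k}\{\lambda_r-\mu_r\}$ and observe $\lambda_k=\lambda_k-\mu_k\le M$. For an arbitrary machine $\ell$ compare its final completion with $t_{LD}$: if $\ell$ does not receive a largest rank-$k$ job then, again by the anti-sorted matching, its completion after rank $k-1$ dominates that of $i'$, so the final gap is at most the rank-$k$ spread $\lambda_k-\mu_k\le M$. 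Consequently the smallest completion after rank $k$ is at least $t_{LD}-M$.

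The main obstacle is the handling of ties, and it appears in two places. First, in the existence construction one must check that each REDUCE/$\Box$-REDUCE step truly lowers the minimality measure and does not undo a condition already secured at an earlier rank; choosing the rank at which to reduce and doing this bookkeeping is the real work. Second, the clean ``largest job onto the smallest machine'' reading used in the first and third bullets is literally correct only when the largest job of the rank is unique: if several jobs equal $\lambda_2$ (respectively $\lambda_k$), a machine other than the true minimum may also receive such a job, so $i'$ need not sit at $a_m(1)$ (respectively $a_m(k-1)$), and the gap between two machines that both receive a largest rank-$k$ job must still be bounded. Closing these gaps is exactly where minimality is indispensable: if $i'$ carried a rank-$1$ job strictly larger than $\mu_1$ one exhibits a ``twin'' machine lying strictly below $i'$ throughout and applies REDUCE to the rank-$1$ job of $i'$ to manufacture a strictly smaller counterexample, and an analogous exchange controls the tie case in the third bullet. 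I expect this tie analysis, rather than the elementary LD computations, to be the crux.
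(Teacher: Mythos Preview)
The paper does not actually prove this lemma: its entire ``proof'' is the single line ``See Lemmas 6, 7, 8, 9, and 10 of \cite{RTH2013}.'' Everything substantive is deferred to the external reference, so there is no in-paper argument against which to compare your proposal step by step.

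That said, your sketch is very much in the spirit of the surrounding material. Building the existence of a Type~I2 counterexample by starting from Type~IR1 and repeatedly applying \textup{REDUCE}/$\Box$-\textup{REDUCE} at an offending rank, invoking Lemma~\ref{NONDECREASING_PROFILE} and Proposition~\ref{LD_makespan_of_P2} exactly as in the proof of Lemma~\ref{IR1_EXISTS}, is precisely the template the paper uses for the results it does prove in-line. Your derivations of the first two bullets from the anti-sorted LD matching and Property.2 are correct in the tie-free case, and you are right that the entire difficulty lies in ties: for the first bullet, several machines may carry a $\lambda_2$-job, and for the third bullet, a second machine carrying a $\lambda_k$-job could sit strictly below $i'$ after rank $k-1$. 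Your plan to resolve both via a minimality/exchange argument (shrink the rank-$1$ job on $i'$, or swap and reduce, to produce a smaller counterexample) is the natural move and is consistent with how the paper handles similar obstructions. Since the paper outsources the proof, I cannot confirm that \cite{RTH2013} handles the ties in exactly this way, but your identification of where the real work lies is accurate, and nothing in your outline is wrong.
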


\begin{proof}
See Lemmas 6, 7, 8, 9, and 10 of \cite{RTH2013}.
\end{proof}

Note that Lemma \ref{lem:I2properties} exposes many, combinatorially very strong properties of a minimal
counterexample of Type I2.  These properties allow us to have access to very strong inequalities
on the optimal makespan in terms of the makespan of an LD schedule for such minimal counterexamples.

\begin{theorem} \label{thm:1} (Ravi, Tun\c{c}el and Huang \cite{RTH2013})
The Coffman--Sethi conjecture holds for all instances with either property
given below:
\begin{itemize}
\item[(i)]
$m \leq 3$ (FM instances with at most three machines),
\item[(ii)]
$k \leq 3$
(FM instances with at most three ranks, i.e., for all machine-job pairs $(m,n)$
satisfying $n \leq 3m$).
\end{itemize}
\end{theorem}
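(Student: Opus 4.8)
The plan is to argue by contradiction. Suppose the conjecture is false; then by Lemmas \ref{cor:IR}, \ref{IR1_EXISTS} and \ref{lem:I2properties} there is a minimal counterexample of Type I2, and since the reductions behind those lemmas never increase the number of machines or ranks, the existence of any counterexample with $m\le 3$ (respectively $k\le 3$) would produce such a minimal Type I2 counterexample in the same range. It therefore suffices to rule out a Type I2 counterexample with $m\le 3$ or with $k\le 3$. I would start from the explicit form of the makespan machine $i'$ given by Lemma \ref{lem:I2properties}: $i'$ carries a job of size $\mu_1$ in rank $1$ and a job of size $\lambda_r$ in each rank $r\in\{2,\dots,k\}$, so $t_{LD}=\mu_1+\sum_{r=2}^{k}\lambda_r$, and together with (Property.2) (namely $\mu_r=\lambda_{r+1}$ and $\mu_k=0$) this expresses $t_{LD}$ through the rank maxima alone. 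The degenerate cases are immediate: for $m=1$ or $k=1$ the LD schedule is itself optimal, so the ratio is $1$.

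Next I would clear the only easy nondegenerate small-rank case, $k=2$. There each machine runs exactly one rank-$1$ job and one rank-$2$ job, and the LD rule matches the largest remaining rank-$2$ job with the smallest current completion time, i.e.\ it pairs the two ranks in opposite sorted order. Opposite-order pairing minimizes the maximum pair-sum, so the LD schedule already attains the optimal makespan and the ratio is $1$. Hence among few-rank instances only $k=3$ has content, and among few-machine instances only $m\in\{2,3\}$ do; in each of these the conjectured bound is actually achieved by the bad family of \cite{RTH2013}, so the target inequality $t^{*}\ge\frac{4m-1}{5m-2}\,t_{LD}$ must be proved with no slack to spare.

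The heart of both remaining cases is a sufficiently sharp lower bound on $t^{*}$. The crude bounds $t^{*}\ge\frac1m\sum_j p_j$ and $t^{*}\ge\lambda_1$ are not enough: when the rank maxima cluster (say $\lambda_2=\lambda_3$) the average-load bound overshoots the conjectured ratio, even though such an instance is in fact not a counterexample, because the rank constraint forces a single machine to stack several of the near-equal large jobs and thereby raises the true optimum. The remedy is to bring in the finer content of Lemma \ref{lem:I2properties}. The bound that every completion time is at least $t_{LD}-\max_{r\ge2}\{\lambda_r-\mu_r\}$ lower-bounds the loads of the remaining $m-1$ machines, and the machine $i''\neq i'$ whose completion time after rank $k-1$ is at least that of $i'$ contributes an extra large job to the work count beyond the flat average; I would fold both, via (Property.2), into a lower bound on $t^{*}$ that matches $\frac{4m-1}{5m-2}\,t_{LD}$ in the extremal configuration and is strictly better off it. For $m\in\{2,3\}$ the current profile has only two or three entries, so LD's balancing keeps the machine loads within a tightly controlled gap across all ranks; here I would instead track the load difference rank by rank, in the spirit of Graham's LPT analysis but adapted to the rank constraint, and use integrality to reduce the verification of the small denominators $7$ and $11$ to a finite check.

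The main obstacle is exactly this lower bound on $t^{*}$ in the tight regime. Because the equality $t_{LD}/t^{*}=\frac{5m-2}{4m-1}$ is already realized at $k=3$ and at $m=2,3$, one cannot afford any loss, and the naive work-based bounds provably fall short when the $\lambda_r$ are nearly equal. What must be made precise is the interaction between the clustering of the rank maxima, the forced stacking of large jobs on one machine dictated by the rank constraint, and the integrality of the data: one has to show that in every case where the average-load estimate is too weak, these structural features push $t^{*}$ up by exactly the amount needed. Getting a rank-aware lower bound on the optimal makespan — rather than merely dividing the total work by $m$ — is where the real difficulty lies.
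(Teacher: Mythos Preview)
The paper does not prove this theorem at all; its entire proof reads ``See, respectively, Theorems~2 and~3 of \cite{RTH2013}.'' So there is no in-paper argument for you to match, and what you have written is an attempt to supply a proof the paper deliberately outsources.

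As a proof, your proposal is explicitly incomplete. You dispose of $m=1$, $k=1$, and $k=2$ correctly (the opposite-order pairing argument for $k=2$ is right: LD is actually optimal there), but for $k=3$ with arbitrary $m$ and for $m\in\{2,3\}$ with arbitrary $k$ you offer only a plan and then name the obstacle yourself in the last paragraph. Since those are precisely the tight, contentful cases, what remains is a sketch rather than a proof. Your remark about ``reducing the verification of the small denominators $7$ and $11$ to a finite check'' via integrality is also not substantiated: integrality alone does not bound $k$, so you have not indicated why the check would be finite.

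There is in addition a logical gap in your reduction step. Lemma~\ref{lem:I2properties} gives properties of a \emph{minimal} Type~I2 counterexample, where minimality refers to a five-level hierarchy defined in \cite{RTH2013} and not reproduced here. You claim that a counterexample with $m\le 3$ (or $k\le 3$) yields a minimal Type~I2 counterexample in the same range because the reductions preserve $m$ and $k$. But preserving $m$ and $k$ under reduction only gives you \emph{some} Type~I2 counterexample with small $m$ or $k$; the globally minimal one, for which Lemma~\ref{lem:I2properties} is stated, could a priori sit at a larger $m$. To make your argument go through you need either that $m$ and $k$ are the top levels of the minimality hierarchy, or that the conclusions of Lemma~\ref{lem:I2properties} already hold for counterexamples minimal among those with a fixed $m$ (respectively fixed $k$). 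Either is plausible, but neither is available from the present paper alone; you would have to go into \cite{RTH2013} to verify it, at which point you might as well read its Theorems~2 and~3 directly.
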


\begin{proof}
See, respectively, Theorems 2 and 3 of \cite{RTH2013}.
\end{proof}

Next, we prove that the conjecture holds for all the remaining cases.

\begin{theorem} \label{thm:2}
The Coffman--Sethi conjecture holds for all instances with either property
given below:
\begin{itemize}
\item[(i)]
$m \geq 4$ (FM instances with at least four machines),
\item[(ii)]
$k \geq 4$
(FM instances with at least four ranks, i.e., for all machine-job pairs $(m,n)$
satisfying $n \geq 3m+1$).
\end{itemize}
\end{theorem}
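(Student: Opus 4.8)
The plan is to argue by contradiction against a minimal counterexample, combining the structural inequalities supplied by Lemma~\ref{lem:I2properties} with a descent via the procedures REDUCE and $\Box$-REDUCE. Suppose the Coffman--Sethi conjecture is false. By Lemma~\ref{lem:I2properties} there is a minimal counterexample $P$ of Type I2, and by Theorem~\ref{thm:1} it must satisfy $m\geq 4$ and $k\geq 4$ (otherwise parts (i) or (ii) of Theorem~\ref{thm:1} already exclude it). Thus proving Theorem~\ref{thm:2} reduces to showing that no such $P$ exists. First I would record the makespan identity given by property (ii) of Type I2 together with the first bullet of Lemma~\ref{lem:I2properties}: the unique makespan machine $i'$ carries a job of size $\mu_1$ in rank $1$ and a job of size $\lambda_r$ in every rank $r\in\{2,\dots,k\}$, so that
\[
t_{LD}\;=\;\mu_1+\sum_{r=2}^{k}\lambda_r\;=\;\lambda_2+\sum_{r=2}^{k}\lambda_r,
\]
where the last equality uses (Property.2).

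Next I would extract two complementary lower bounds on $t^{*}$. Because every flowtime-optimal schedule assigns exactly one job of each rank to each machine, the machine carrying the rank-$j$ maximum $\lambda_j$ has load at least $\lambda_j+\sum_{r\neq j}\mu_r$; maximizing over $j$ and using $\mu_r=\lambda_{r+1}$ and $\mu_k=0$ gives
\[
t^{*}\;\geq\;\max_{1\le j\le k}(\lambda_j-\mu_j)\;+\;\sum_{r=2}^{k}\lambda_r,
\]
which combined with the identity yields $t_{LD}-t^{*}\le\lambda_2-\max_{1\le j\le k}(\lambda_j-\mu_j)$. The third bullet of Lemma~\ref{lem:I2properties} supplies the complementary estimate $t_{LD}-t^{*}\le\frac{m-1}{m}\,D'$, where $D':=\max_{2\le r\le k}(\lambda_r-\mu_r)$, since then every completion time after rank $k$ lies within $D'$ of $t_{LD}$ and $t^{*}\ge\frac1m\sum_j p_j$. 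Rewriting the target as $(4m-1)(t_{LD}-t^{*})\le(m-1)\,t^{*}$, which is equivalent to $t_{LD}/t^{*}\le\frac{5m-2}{4m-1}$, turns the whole problem into comparing these quantities.

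The main engine is then a dichotomy on the rank $r_0\in\{2,\dots,k\}$ achieving $D'$. Applying the row bound for $j=r_0$ and using $\lambda_s\ge\lambda_{r_0}\ge D'$ for $2\le s\le r_0$ gives $t^{*}\ge r_0 D'$; feeding this and $t_{LD}-t^{*}\le\frac{m-1}{m}D'$ into the target reduces it to $r_0\ge 4-\frac1m$, which holds automatically once $r_0\ge 4$. This is exactly where the hypothesis $k\ge 4$ supplies the needed room, and it settles every instance whose largest rank-spread occurs at rank $4$ or later. It then remains to treat the boundary cases $r_0\in\{2,3\}$ (and the companion case where the rank-$1$ spread $\lambda_1-\mu_1$ dominates), in which the spread is concentrated in the top ranks and the remaining ranks are necessarily ``light.''

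For these residual cases I would invoke the descent. Applying REDUCE$(P,r)$ (or $\Box$-REDUCE$(P,r)$ to remain within Type IR) at a rank $r$ chosen near $r_0$ produces an instance $P'$ with the same $m$ and $k$ but strictly smaller processing times, hence strictly smaller in the five-level minimality order; by minimality $P'$ satisfies the conjecture. Proposition~\ref{LD_makespan_of_P2} gives $t_{LD}(P')\le t_{LD}-2$, while Lemma~\ref{NONDECREASING_PROFILE} and the explicit form of the reduction pin this drop to exactly $2$ and yield $t^{*}(P')\le t^{*}-1$; substituting the conjectured bound for $P'$ back into the assumed violation for $P$, and using that $(4m-1)t_{LD}-(5m-2)t^{*}$ is a positive integer, forces the contradiction. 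The hard part will be precisely this boundary analysis: when $k=4$ and $m$ is small the structural slack in the dichotomy vanishes, so the reduction inequalities, the profile-flatness estimate of Lemma~\ref{lem:I2properties}, and the integrality gap must be balanced with essentially no room to spare, and one must simultaneously verify that each reduced instance genuinely stays of Type IR/I2 and strictly decreases the minimality measure, so that minimality may legitimately be applied.
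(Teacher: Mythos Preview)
Your setup and the case $r_0\ge 4$ are fine: the identity $t_{LD}=\lambda_2+\sum_{r=2}^k\lambda_r$, the row bound $t^*\ge (\lambda_j-\mu_j)+\sum_{r=2}^k\lambda_r$, and the flatness estimate $t_{LD}-t^*\le\frac{m-1}{m}D'$ do combine to dispose of any minimal Type~I2 counterexample whose maximal spread occurs at rank four or later. But the proposal has a genuine gap in the residual cases $r_0\in\{2,3\}$: the descent you outline does not close. From Proposition~\ref{LD_makespan_of_P2} you get $t_{LD}(P')\le t_{LD}-2$, and the reduction only guarantees $t^*(P')\le t^*-1$ (not $\ge$); both inequalities push the ratio for $P'$ \emph{down}, so the fact that $P'$ satisfies the conjecture is entirely consistent with $P$ violating it. Concretely, even if you could pin down $t_{LD}(P')=t_{LD}-2$ and $t^*(P')=t^*-1$, substituting back yields only $(4m-1)t_{LD}-(5m-2)t^*\le 3m$, which is a bound on the violation, not a contradiction. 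The integrality of $(4m-1)t_{LD}-(5m-2)t^*$ does not rescue this. In addition, Proposition~\ref{LD_makespan_of_P2} is stated for minimal counterexamples of Type~I or Type~IR, not Type~I2, so you would first have to argue that your $P$ is simultaneously minimal in the right sense.

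The paper's proof does not descend at all once it reaches Type~I2; it is a pure inequality argument, and the key ingredient you are missing is the \emph{second} bullet of Lemma~\ref{lem:I2properties}. That bullet supplies a second machine $i''$ whose completion time after rank $k$ is at least $t_{LD}-\lambda_k$, which upgrades your flatness bound to
\[
mt^*\;\ge\; t_{LD}+(t_{LD}-\lambda_k)+(m-2)\bigl(t_{LD}-D'\bigr),
\]
i.e.\ $t_{LD}-t^*\le \frac{\lambda_k}{m}+\bigl(1-\tfrac{2}{m}\bigr)D'$. With this sharper coefficient on $D'$ (namely $1-\tfrac{2}{m}$ instead of your $\tfrac{m-1}{m}$), the paper combines it with the row bound and the identity for $t_{LD}$ and obtains algebraic contradictions separately for $s=k$, for $s\in\{3,\dots,k-1\}$, and for $s=2$. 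No REDUCE/$\Box$-REDUCE step is needed after Type~I2 is established. If you want to salvage your approach, drop the descent and instead carry out this sharper inequality analysis for the cases $r_0=2$ and $r_0=3$; your current flatness bound is just barely too weak to make that work.
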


\begin{proof}
Suppose the above claim is false.  Then there exists a minimal counterexample of Type I2 to the
Coffman--Sethi conjecture.  Since the conjecture holds for all instances with $m\leq 3$
as well as for all instances with $k \leq 3$ (by Theorem \ref{thm:1}), there must exist a minimal counterexample of Type I2
to the claim with $k$ and $m$ both at least equal to four.  Let $t$ denote the makespan for an LD schedule of the minimal counterexample of Type I2 with $k$ ranks.
Then, by Lemma \ref{lem:I2properties}, we have
\[
mt^* \geq t + (t - \lambda_k) + (m - 2)\left(t - \max_{r \in \{2, \ldots, k\}} \{\lambda_r - \mu_r\}\right).
\]
The last relation is equivalent to
\bea
\label{eqn:a}
t^* & \geq & t - \frac{\lambda_k}{m} -\left(1-\frac{2}{m}\right) \max_{r \in \{2, \ldots, k\}} \{\lambda_r - \mu_r\}.
\eea
Since we are working with a counterexample,
\bea
\label{eqn:badratio}
t & > & \left(\frac{5m-2}{4m-1}\right) t^*.
\eea
Inequalities \eqref{eqn:a} and \eqref{eqn:badratio} imply,
\bea
\label{eqn:c}
\left(\frac{m-1}{4m-1}\right) t^* & < & \frac{\lambda_k}{m} + \left(1-\frac{2}{m}\right) \max_{r \in \{2, \ldots, k\}} \{\lambda_r - \mu_r\}.
\eea
Suppose that the maximum, $\max_{r \in \{2, \ldots, k\}} \{\lambda_r - \mu_r\}$, is attained by $r=k$.  Then, \eqref{eqn:c} implies
(since $\mu_k =0$): $t^* < \left(4-\frac{1}{m}\right) \lambda_k.$  However, $t^* \geq \lambda_k + \sum_{r=1}^{k-1}\mu_r > k \lambda_k.$
Since $k \geq 4$, we reach a contradiction.  Therefore, we may assume, there exists $s \in \{2,3, \ldots, k-1\}$ such that
$\max_{r \in \{2, \ldots, k\}} \{\lambda_r - \mu_r\} = \lambda_s -\mu_s.$

Using \eqref{eqn:a} and \eqref{eqn:badratio}, as well as the facts $\mu_s = \lambda_{s+1}$ and
$\lambda_k \leq \lambda_{s+1}$, we obtain
\bea
\label{eqn:d}
\left(\frac{m-1}{5m-2}\right) t & < & \frac{\lambda_{s+1}}{m} + \left(1-\frac{2}{m}\right) (\lambda_s - \lambda_{s+1}).
\eea
Since
\bea
\label{eqn:e}
t & = & 2 \lambda_2 + \sum_{r=3}^k \lambda_r,
\eea
substituting this for $t$ in \eqref{eqn:d}, we derive:
\bea
\label{eqn:f}
\lambda_2 & < & -\frac{1}{2} \sum_{r=3}^k \lambda_r + \frac{1}{2(m-1)} \left(5 -\frac{2}{m}\right) \lambda_{s+1}
+\left[\frac{m-2}{2(m-1)}\right] \left(5-\frac{2}{m}\right)(\lambda_s - \lambda_{s+1}).
\eea

Next, we consider a lower bound on $t^*$ based on $\lambda_s$:
\bea
\label{eqn:g}
t^* & \geq & \sum_{r=1}^{s-1} \mu_r + \lambda_{s} + \sum_{r=s+1}^{k} \mu_r
= \sum_{r=2}^{s-1} \lambda_r + 2\lambda_{s} + \sum_{r=s+2}^{k} \lambda_r.
\eea
Note that we are using the convention that an empty sum is zero.
Since we are working with a counterexample, we have $\frac{t}{t^*} > \frac{5m-2}{4m-1}$.  This,
together with the relations \eqref{eqn:g} and \eqref{eqn:e} imply
\bea
\label{eqn:ge}
\frac{2\lambda_{2} + \sum_{r=3}^{k} \lambda_r}{\sum_{r=2}^{s-1} \lambda_r + 2\lambda_{s} + \sum_{r=s+2}^{k} \lambda_r}
& > & \frac{5m-2}{4m-1}.
\eea
If $s \in \{3,4, \ldots,k-1\}$, then the last inequality is equivalent to
\bea
\label{eqn:h}
\lambda_2 & > & \frac{1}{3}\left(1-\frac{1}{m}\right) \sum_{r=3}^{s-1} \lambda_r + \left(2- \frac{1}{m}\right)\lambda_s
-\frac{1}{3}\left(4 -\frac{1}{m}\right) \lambda_{s+1} + \frac{1}{3}\left(1-\frac{1}{m}\right) \sum_{r=s+2}^{k} \lambda_r.
\eea
Finally, relations \eqref{eqn:f} and \eqref{eqn:h} imply
\beann
& & \left(\frac{5}{6} -\frac{1}{3m}\right) \left(\sum_{r=3}^{s-1} \lambda_r + \sum_{r=s+2}^{k} \lambda_r \right)
+\left[\frac{5}{2} -\frac{1}{m}-\frac{(5m-2)(m-2)}{2m(m-1)}\right] \lambda_s \\
& & +\left(\frac{5}{3}-\left[\frac{17m -8}{3m(m-1)}\right]\right) \lambda_{s+1}
< 0.
\eeann
For $m$ and $k$ at least four, sum of the first two terms on the left-hand-side is clearly positive.  The last term is nonnegative
for every $m \geq 4$.  Hence, we reached a contradiction.  Therefore, we may assume, $s=2$.

Let us go back to relation \eqref{eqn:c} and use $s=2$ and $\lambda_k \leq \lambda_4$ to obtain:
\bea
\label{eqn:d2}
\left(\frac{m-1}{5m-2}\right) t & < & \frac{\lambda_{4}}{m} + \left(1-\frac{2}{m}\right) (\lambda_2 - \lambda_{3}).
\eea
Substituting \eqref{eqn:e} into the above, we have
\bea
\label{eqn:f2}
\lambda_2 & > & \left( \frac{6m^2-13m+4}{3m^2-10m+4} \right) \lambda_{3}
+\left(\frac{m^2-6m+2}{3m^2-10m+4}\right) \lambda_{4}.
\eea
Since $s=2$, \eqref{eqn:ge} becomes
\bea
\label{eqn:ge2}
\lambda_2 & < & \left[ \frac{4m-1}{2(m-1)} \right] \lambda_{3} -\frac{1}{2} \sum_{r=4}^{k} \lambda_r.
\eea
Now, using the fact that $m \geq 4$, relations \eqref{eqn:f2} and \eqref{eqn:ge2} imply
\beann
\left(\frac{5m^2+8m-7}{m-1} \right) \lambda_3
& < &
-\left(m^2+2m+8\right) \lambda_4 -\left(3m^2-10m+4\right) \sum_{r=5}^{k} \lambda_r.
\eeann
For $m$ at least four, the coefficient of $\lambda_3$ in the left-hand-side above is positive, thus the left-hand-side is positive.
However, the right-hand-side is always nonpositive.  Hence, we reached a contradiction.  Therefore, our original claim
that Coffman--Sethi conjecture holds for all instances with $m$ or $k$ at least four is true.
\end{proof}

\begin{theorem}
\label{thm:main}
The LD algorithm has a makespan ratio with a worst-case bound equal to
$\frac{5m-2}{4m-1}$.  Moreover, this bound is achievable for every $m \geq 2$.
\end{theorem}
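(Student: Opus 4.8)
The plan is to assemble the main theorem directly from the two case-analysis theorems already proved, together with the tight family of instances exhibited in the introduction. The statement has two halves that must be established separately: the upper bound, namely that $t_{LD}/t^* \leq \frac{5m-2}{4m-1}$ for every FM instance (equivalently, that no counterexample to the conjecture exists), and the matching lower bound, namely that the ratio $\frac{5m-2}{4m-1}$ is actually attained, so that the bound cannot be improved.

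For the upper bound I would argue by an exhaustive case split on the number of machines $m$. Every FM instance has some number of machines, and any $m$ is either at most $3$ or at least $4$. If $m \leq 3$, Theorem \ref{thm:1}(i) asserts that the conjecture holds; if $m \geq 4$, Theorem \ref{thm:2}(i) asserts that the conjecture holds. Since the two ranges $m \leq 3$ and $m \geq 4$ are complementary and jointly exhaust all possible values of $m$, the conjectured bound holds for every instance, and hence the worst-case ratio of LD is at most $\frac{5m-2}{4m-1}$. (One could equally split on the number of ranks $k$ using parts (ii) of the two theorems, since $k \leq 3$ and $k \geq 4$ are also complementary, but the split on $m$ is the cleanest.)

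For achievability I would invoke the explicit family described in the introduction and taken from \cite{RTH2013}: for each integer $m \geq 2$, set $n := 3m$ and define the processing times $p_j$ exactly as given there. As already noted in the introduction, a direct verification shows that for this family the LD schedule has makespan precisely $\frac{5m-2}{4m-1}$ times the optimal makespan, for every $m \geq 2$. This certifies that the worst-case bound is attained and therefore cannot be lowered. Combining the two halves, the worst-case ratio equals $\frac{5m-2}{4m-1}$, which is the claim.

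I do not anticipate a genuine obstacle in this concluding step, since all the substantive combinatorial work resides in Theorem \ref{thm:2} and in the minimality and structure lemmas feeding it (Lemmas \ref{NONDECREASING_PROFILE}--\ref{lem:I2properties} and Proposition \ref{LD_makespan_of_P2}). The only two points requiring care are bookkeeping rather than mathematical difficulty: first, that the case split is genuinely exhaustive, which holds precisely because the thresholds ($\leq 3$ versus $\geq 4$) partition the integers; and second, that the stated family indeed realizes the ratio for all $m \geq 2$, which is the routine computation already alluded to. With these in place, the two inequalities close to an equality and the proof is complete.
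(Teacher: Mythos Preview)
Your proposal is correct and mirrors the paper's own proof exactly: the upper bound is obtained by combining Theorems~\ref{thm:1} and~\ref{thm:2} (which jointly cover all values of $m$), and achievability is certified by the family of instances from Section~\ref{sec:intro}. The paper's proof is in fact just these two sentences, so your elaboration on the exhaustiveness of the case split and the alternative split on $k$ goes slightly beyond what the authors wrote but changes nothing substantive.
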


\begin{proof}
Validity of the bound follows from Theorems \ref{thm:1} and \ref{thm:2}.
Second statement of the theorem is established by the family of instances presented in
Section \ref{sec:intro}.
\end{proof}

{\bf Acknowledgment:}
The work was supported in part by Discovery Grants from NSERC (Natural Sciences and Engineering Research Council of Canada).

\end{document}